\title{\bf
Verifying Compositional Refinement of Assume/Guarantee Contracts using Linear Programming}
\newcommand{\C}{\mathcal{C}}
\newcommand{\R}{\mathbb{R}}
\newcommand{\N}{\mathbb{N}}
\newcommand{\D}{\mathcal{D}}
\newcommand{\OO}{\Omega}
\newcommand{\Sig}{\mathcal{S}}
\newcommand{\sat}{\vDash}
\newtheorem{defn}{Definition}
\newtheorem{thm}{Theorem}
\newtheorem{prop}{Proposition}
\newtheorem{cor}{Corollary}
\newtheorem{remark}{Remark}
\author{Miel Sharf, Bart Besselink and Karl Henrik Johansson% <-this % stops a space
\thanks{M. Sharf and K. H. Johansson are with the Division of Decision and Control Systems, KTH Royal Institute of
Technology, 10044 Stockholm, Sweden (e-mail: {\tt\small \{sharf,kallej\}@kth.se}). B. Besselink is with the Bernoulli Institute for Mathematics, Computer Science and
Artificial Intelligence, University of Groningen, 9700 AK Groningen, The Netherlands (e-mail: {\tt\small b.besselink@rug.nl})}%
}
\begin{document}

\maketitle
\thispagestyle{empty}
\pagestyle{empty}

%%%%%%%%%%%%%%%%%%%%%%%%%%%%%%%%%%%%%%%%%%%%%%%%%%%%%%%%%%%%%%%%%%%%%%%%%%%%%%%%
\begin{abstract}
Verifying specifications for large-scale modern engineering systems can be a time-consuming task, as most formal verification methods are limited to systems of modest size. Recently, contract-based design and verification has been proposed as a modular framework for specifications, and linear-programming-based techniques have been presented for verifying that a given system satisfies a given contract. In this paper, we extend this assume/guarantee framework by presenting necessary and sufficient conditions for a collection of contracts on individual components to refine a contract on the composed system. These conditions can be verified by solving linear programs, whose number grows linearly with the number of specifications defined by the contracts. We exemplify the tools developed using a case study considering safety in a car-following scenario, where noise and time-varying delay are considered.
\end{abstract}

%%%%%%%%%%%%%%%%%%%%%%%%%%%%%%%%%%%%%%%%%%%%%%%%%%%%%%%%%%%%%%%%%%%%%%%%%%%%%%%%
\section{Introduction}
Modern engineering systems tend to be large-scale and multi-disciplinary systems, comprised of components of different types and functions, including planning, sensing, controlling, actuating, and timing. Such systems are subject to specifications of different types, including performance and safety. However, many systems, e.g., intelligent transportation systems and smart manufacturing systems, are subject to complex specifications that cannot be captured using the classical tools of dissipativity theory \cite{vanderSchaft2000} or set invariance \cite{Blanchini2008}. These include behaviour, tracking, timing, and temporal logic specifications. 

Formal methods in control have been developed to capture these more complex specifications, typically defined via a temporal logic formula on the state of the system \cite{Belta2007,Tabuada2009,Wongpiromsarn2010}. These methods rely on discretizing the dynamical system and studying the resulting discrete automaton. As a result, these methods are susceptible to the curse of dimensionality, limiting their application to systems of moderate size. Other approaches for verification rely on data, but basic probabilistic arguments show that much data is needed in order to verify any specification with high probability \cite{Shalev2017}. These problems become much more exaggerated when we wish to make small changes to the dynamical system (e.g., replacing a component with a comparable alternative), forcing us to start the verification process from scratch.

Contract-based design were suggested to tackle this problem. Contract theory was first introduced as a modular approach for software design \cite{Meyer1992}, and it has proved useful for design of cyber-physical methods, both in theory and in practice \cite{Nuzzo2014,Nuzzo2015}. Contracts prescribe assumptions on the environments a software component can act in, and guarantees on its behaviour in these environments \cite{Benveniste2018}.

Recently, several papers have tried to define assume/guarantee contracts for dynamical (control) systems, which must be accompanied by efficient computational tools for satisfaction, composition and refinement, allowing their application to real-world systems. The paper \cite{Besselink2019} presented a simulation-based framework for assume/guarantee contracts for continuous-time dynamical systems, where satisfaction can be verified using methods from geometric control theory \cite{vanderSchaft2004}. For discrete-time systems, several contract-based approaches were considered in literature, prescribing assumptions on the input signals and guarantees on the state and output signals \cite{Saoud2018,Saoud2019,Ghasemi2020,Eqtami2019}. Another approach was presented in \cite{Nuzzo2015,SharfADHS2020}, prescribing assumptions on the input and guarantees on the output, relative to the input. These allow one to consider sensors, observers, and specifications such as tracking, for which the guarantee on the output should depend on the input. Moreover, the framework allows preliminary analysis of systems before the state of the system (or even its size) has been determined, e.g. before we have determined whether a specific controller would be proportional or a PI controller. The paper \cite{SharfADHS2020} prescribed efficient linear programming (LP) based tools for verifying that a given LTI system satisfies a given contract, as well as similar tools for verifying that a given contract refines another contract. 

However, no LP-based tools are known for verifying compositional refinement, i.e. that a collection of contracts on individual components refine a contract on the composite system.  In this paper, we present an LP-based method for verifying that a composition of two contracts on components in cascade refines a contract on the composite system. Our methods are exact, in the sense that they give a necessary and sufficient condition for compositional refinement. Moreover, our methods can be extended to consider a larger cascade of systems, with only minor changes. We also demonstrate their applicability in a case study, in which we consider safety of autonomous vehicles under time-varying delay and noisy measurements.

The rest of the paper is organized as follows. Section \ref{sec.AG}
presents some basic notions from the assume/guarantee framework of \cite{SharfADHS2020}. Section \ref{sec.Comp} presents the LP-based methods for verifying compositional refinement, constituting the main contribution of this paper. Section \ref{sec.Num} provides a computational case study of a two-vehicle system with time-varying delay and noise.

\paragraph*{Notation}
The set of natural numbers is denoted as $\N = \{0,1,2,\ldots\}$. For two sets $X,Y$, we denote their Cartesian product by $X\times Y$. The Euclidean space of dimension $n$ will be denoted as $\R^n$. For a some $n\in \N$, the set of all discrete-time signals $\N\to \R^n$ is denoted as $\Sig^n$. For vectors $v,u \in \mathbb{R}^n$ , we understand $v \le u$ as an entry-wise inequality.

\section{Assume/Guarantee Contracts} \label{sec.AG}
In this section, we define the class of systems for which
we define an abstract framework of assume/guarantee
contracts, as well as supporting notions such as satisfaction, refinement, and cascaded composition. These notions were previously defined in \cite{SharfADHS2020}, see also \cite{Nuzzo2015}.

\begin{defn}
A dynamical system $\Sigma$ with an input in $\R^{n_d}$ and output in $\R^{n_y}$ is a set valued map $\Sig^{n_d}\rightrightarrows\Sig^{n_y}$.
\end{defn}
In other words, for any input signal $d(\cdot) \in \Sig^{n_d}$, $\Sigma(d)\subseteq \Sig^{n_y}$ is a set consisting of all possible corresponding outputs, see Fig. \ref{fig.DynSys}. 
\begin{figure}[b]
    \centering
    \vspace{-10pt}
    \includegraphics[scale = 0.5]{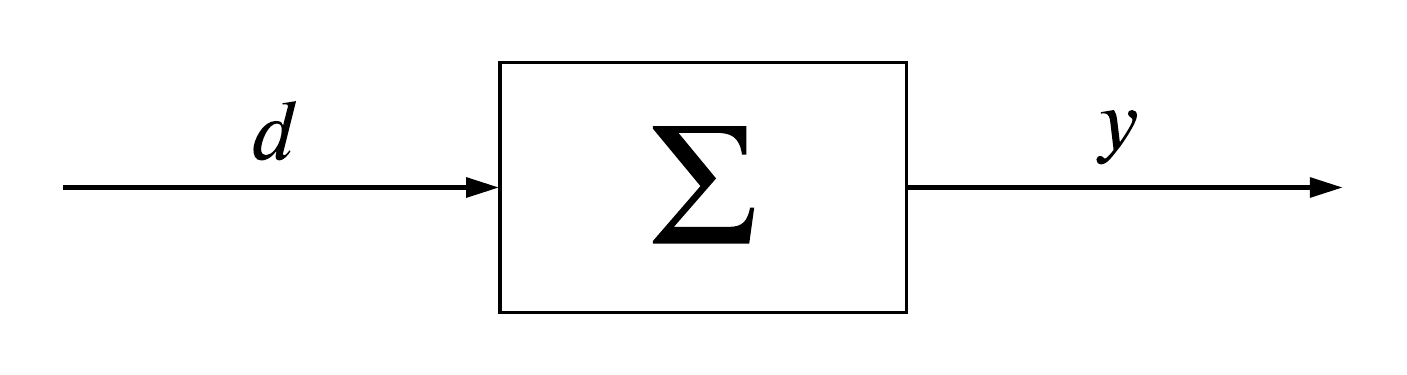}
    \vspace{-10pt}
    \caption{A dynamical system.}
    \label{fig.DynSys}
\end{figure}
We adopt the formulation of assume/guarantee contracts by prescribing assumptions on the input $d(\cdot)$ and demanding guarantees on the output $y(\cdot)$ given the input $d(\cdot)$.

\begin{defn}
An assume/guarantee contract with input $d(\cdot)\in \Sig^{n_d}$ and output $y(\cdot)\in \Sig^{n_y}$ is a pair $(\D,\OO)$ where $\D \subseteq \Sig^{n_d}$ are the assumptions and $\OO \subseteq \Sig^{n_d} \times \Sig^{n_y}$ are the guarantees. The contract $\C$ is called \emph{consistent} if for any $d(\cdot)\in \D$, there exists some $y(\cdot)$ such that $(d(\cdot),y(\cdot))\in \OO$.
\end{defn} 
In other words, we put assumptions on the input $d(\cdot)$ and demand guarantees on the input-output pair $(d(\cdot),y(\cdot))$. Assume/guarantee contracts define specifications on systems in the following manner:
\begin{defn}
We say that a system $\Sigma$ satisfies $\C = (\D,\OO)$ (or implements $\C$), and write $\Sigma \sat \C$, if for any $d(\cdot)\in \D$ and any $y(\cdot)\in \Sigma(d)$, we have $(d(\cdot),y(\cdot))\in \OO$.
\end{defn}

\subsection{Refinement and Composition}
The most important aspect of contract theory is its modularity, supported by the notions of refinement and composition. These allow one to replace a contract on a composite system by ``smaller" contracts on subsystems, which can be further replaced by even ``smaller" contracts on individual components. We start by defining refinement, dictating when one contract is stricter than another:
\begin{defn}
Let $\C_i = (\D_i,\OO_i)$ be contracts for $i=1,2$ with input in $\Sig^{n_d}$ and output in $\Sig^{n_y}$. We say $\C_1$ \emph{refines} $\C_2$ if $\D_1 \supseteq \D_2$ and $\Omega_1 \cap (\D_2 \times \Sig^{n_y}) \subseteq \Omega_2 \cap(\D_2 \times \Sig^{n_y})$. In that case, we write $\C_1 \preccurlyeq \C_2$.
\end{defn}
Colloquially, $\C_1 \preccurlyeq \C_2$ if $\C_1$ assumes less than $\C_2$, but guarantees more given the assumptions. The following proposition summarizes two important properties about refinement:
\begin{prop}[\cite{SharfADHS2020}]\label{prop.PropertiesOfRefinement}
Let $\C_i$ be assume/guarantee contracts for $i=1,2,3$, and let $\Sigma$ be a system. Then:
\begin{itemize}
    \item If $\C_1 \preccurlyeq \C_2$ and $\C_2 \preccurlyeq \C_3$ then $\C_1 \preccurlyeq \C_3$.
    \item If $\C_1 \preccurlyeq \C_2$ and $\Sigma \sat \C_1$, then $\Sigma \sat \C_2$.
\end{itemize}
\end{prop}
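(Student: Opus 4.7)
The plan is to verify each bullet directly from the definition of refinement, by chasing elements through the two conditions $\D_i\supseteq\D_j$ and $\OO_i\cap(\D_j\times\Sig^{n_y})\subseteq\OO_j\cap(\D_j\times\Sig^{n_y})$. Neither item is deep; the only subtlety is that the guarantee-inclusion in the definition is relativized to $\D_j\times\Sig^{n_y}$, so one has to be a bit careful not to pretend that $\OO_1\subseteq\OO_2$ outright.

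For the transitivity bullet, I would first dispose of the assumption inclusion by noting $\D_1\supseteq\D_2\supseteq\D_3$, hence $\D_1\supseteq\D_3$. For the guarantee part, I pick an arbitrary $(d,y)\in\OO_1\cap(\D_3\times\Sig^{n_y})$. Since $\D_3\subseteq\D_2$, this pair also lies in $\OO_1\cap(\D_2\times\Sig^{n_y})$, so by $\C_1\preccurlyeq\C_2$ it lies in $\OO_2$. Intersecting with $\D_3\times\Sig^{n_y}$ again (the first component is still in $\D_3$), $(d,y)\in\OO_2\cap(\D_3\times\Sig^{n_y})$, and $\C_2\preccurlyeq\C_3$ then places it in $\OO_3\cap(\D_3\times\Sig^{n_y})$, as required.

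For the second bullet, I assume $\C_1\preccurlyeq\C_2$ and $\Sigma\sat\C_1$, and take any $d\in\D_2$ and $y\in\Sigma(d)$. Because $\D_2\subseteq\D_1$, we get $d\in\D_1$, and satisfaction of $\C_1$ then yields $(d,y)\in\OO_1$. Since additionally $d\in\D_2$, we have $(d,y)\in\OO_1\cap(\D_2\times\Sig^{n_y})$, and the guarantee clause of $\C_1\preccurlyeq\C_2$ puts $(d,y)$ into $\OO_2\cap(\D_2\times\Sig^{n_y})\subseteq\OO_2$. So $\Sigma\sat\C_2$.

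The main (very mild) obstacle is purely bookkeeping: one must always verify that the first coordinate of the pair under consideration belongs to the relevant assumption set before invoking the relativized guarantee inclusion. No additional machinery beyond the definition of refinement and of satisfaction is needed, and no consistency hypothesis on the contracts is used.
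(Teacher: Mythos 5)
Your proof is correct. The paper itself states this proposition with a citation to \cite{SharfADHS2020} and provides no proof of its own, so there is nothing to compare against; your direct definition-chasing argument, including the careful handling of the relativization of the guarantee inclusion to $\D_j\times\Sig^{n_y}$, is exactly the standard argument one would expect.
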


We now move to cascaded composition. Consider the block diagram in Fig. \ref{fig.Cascade}. 
\begin{figure}[t]
    \centering
    \includegraphics[width=0.45\textwidth]{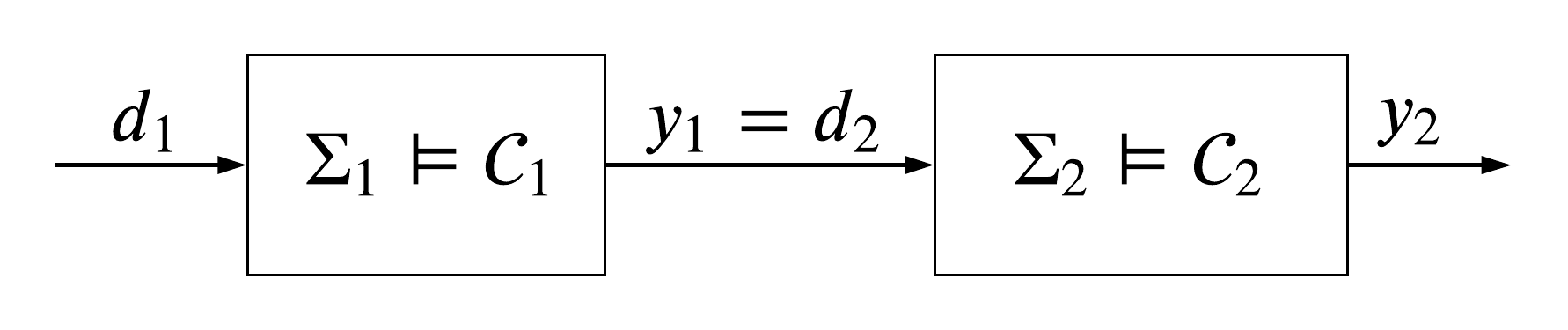}
    \vspace{-10pt}
    \caption{Cascade of contracts.}
    \vspace{-10pt}
    \label{fig.Cascade}
\end{figure}
Given two systems $\Sigma_1,\Sigma_2$ as in Fig. \ref{fig.Cascade}, their cascade $\Sigma_1\otimes\Sigma_2$ is defined via:
\begin{align*}
    \Sigma_1\otimes\Sigma_2(d) = \{y(\cdot):\ \exists z\in \Sigma_1(d)\text{~s.t.~} y\in\Sigma_2(z)\}.
\end{align*}
For two contracts, \cite{SharfADHS2020} defines their cascaded composition:
\begin{defn} \label{def.Composition}
Let $\C_i = (\D_i,\OO_i)$ be two contracts for $i=1,2$ as in Fig. \ref{fig.Cascade}. The cascaded composition $\C_1\otimes\C_2 = (\D_\otimes,\OO_\otimes)$ is a contract with input $d_\otimes = d_1$, output $y_\otimes = y_2$,
\begin{align*}
    \D_\otimes &= \{d_\otimes:~ d_\otimes\in \D_1, \left((d_\otimes,y_1)\in \OO_1 \implies y_1 \in \D_2\right)\},\\
    \OO_\otimes &= \{(d_\otimes,y_\otimes): \exists d_2 = y_1, (d_\otimes,y_1)\in \OO_1, (d_2,y_\otimes) \in \OO_2\}.
\end{align*}
\end{defn}
\vspace{5pt}
\begin{prop}[\cite{SharfADHS2020}] \label{prop.CompositionSat}
Let $\C_1,\C_2$ and $\Sigma_1,\Sigma_2$ be contracts and systems as in Fig. \ref{fig.Cascade}. If $\Sigma_1 \sat \C_1$ and $\Sigma_2 \sat \C_2$, then $\Sigma_1\otimes \Sigma_2 \sat \C_1 \otimes \C_2$.
\end{prop}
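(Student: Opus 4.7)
The plan is a direct chase through definitions, with the intermediate signal produced by the cascade playing the role of glue between the two contracts. Fix an arbitrary $d \in \D_\otimes$ and an arbitrary $y \in (\Sigma_1 \otimes \Sigma_2)(d)$; the goal is to conclude $(d,y) \in \OO_\otimes$, which is exactly the satisfaction condition $\Sigma_1 \otimes \Sigma_2 \sat \C_1 \otimes \C_2$.

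First I would unpack the cascade of systems: by definition, $y \in (\Sigma_1 \otimes \Sigma_2)(d)$ yields an intermediate signal $z$ with $z \in \Sigma_1(d)$ and $y \in \Sigma_2(z)$. This $z$ is the candidate for the existentially quantified $y_1 = d_2$ appearing in the definition of $\OO_\otimes$, so the rest of the argument is to verify the two membership conditions $(d,z) \in \OO_1$ and $(z,y) \in \OO_2$ separately.

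Next I would propagate through $\C_1$. Since $d \in \D_\otimes$, the first clause of the definition of $\D_\otimes$ gives $d \in \D_1$; combined with $z \in \Sigma_1(d)$ and the hypothesis $\Sigma_1 \sat \C_1$, this yields $(d,z) \in \OO_1$. Now I would use the implication hidden in $\D_\otimes$: its second clause states that $(d,y_1) \in \OO_1$ forces $y_1 \in \D_2$, so instantiating with $y_1 = z$ gives $z \in \D_2$. Then the hypothesis $\Sigma_2 \sat \C_2$, applied to $z \in \D_2$ and $y \in \Sigma_2(z)$, delivers $(z,y) \in \OO_2$. Taking the witness $d_2 = y_1 = z$ in the definition of $\OO_\otimes$ gives $(d,y) \in \OO_\otimes$, finishing the proof.

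There is no real analytic obstacle here; the argument is pure definition-chasing. The one place where a slip would be easy is the implicational second clause of $\D_\otimes$: one must remember to first establish $(d,z) \in \OO_1$ from $\Sigma_1 \sat \C_1$ before invoking the implication to conclude $z \in \D_2$. Getting the quantifier order right (the intermediate signal $z$ is chosen first, from the cascade, and only then played against the contracts) is the main bookkeeping point.
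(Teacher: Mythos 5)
Your argument is correct and complete: the witness $z$ from the system cascade, the use of $d \in \D_1$ together with $\Sigma_1 \sat \C_1$ to get $(d,z) \in \OO_1$, the implicational clause of $\D_\otimes$ to get $z \in \D_2$, and then $\Sigma_2 \sat \C_2$ to get $(z,y) \in \OO_2$ is exactly the intended definition-chase. The paper itself states this proposition without proof (importing it from the cited reference), so there is nothing to compare against beyond noting that yours is the standard argument and that you correctly identified the one delicate point, namely that $(d,z) \in \OO_1$ must be established before the implication in $\D_\otimes$ can be invoked.
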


Propositions \ref{prop.PropertiesOfRefinement} and \ref{prop.CompositionSat} play a key role in contract theory, as they allow one to design a composite system to satisfy a given contract by designing each subsystem to satisfy a corresponding ``smaller" contract. This can be achieved by designing each component to satisfy a corresponding refined contract. In order to apply this modular design approach, we need efficient methods to verify both satisfaction $\Sigma \sat \C$ and composed refinement $\C_1 \otimes \C_2 \preccurlyeq \C$. The paper \cite{SharfADHS2020} provides efficient LP-based methods for verifying that a given linear time-invariant system $\Sigma$ satisfies a linear contract $\C$ (see Definition \ref{def.LinCon}). 
%Moreover, \cite{SharfADHS2020,Sankaranarayanan2005} provide LP-based methods for verifying $\C_1 \preccurlyeq \C_2$ for two linear contracts $\C_1,\C_2$ on the same component. 
In this paper, we present efficient LP-based methods for verifying that $\C_1\otimes \C_2 \preccurlyeq \C$. We do so for contracts defined by linear inequalities:
\begin{defn}\label{def.LinCon}
A \emph{linear} contract $\C = (\D,\OO)$ is given by matrices $\mathfrak A^1,\mathfrak A^0,\mathfrak G^1,\mathfrak G^0$ and vectors $\mathfrak a^0,\mathfrak g^0$ of appropriate dimensions such that:
\begin{align*}
    \D &= \{d(\cdot): \mathfrak A^1 d(k+1) + \mathfrak A^0 d(k) \le \mathfrak a^0,~\forall k\},\\
    \OO &= \left\{(d(\cdot),y(\cdot)): \mathfrak G^1 \left[\begin{smallmatrix} d(k+1) \\ y(k+1) \end{smallmatrix}\right] + \mathfrak G^0 \left[\begin{smallmatrix} d(k) \\ y(k) \end{smallmatrix}\right] \le \mathfrak g^0,~ \forall k\right\}
\end{align*}
\end{defn}
\begin{remark}
Definition \ref{def.LinCon} only considers contracts with ``first-order" specifications, in the sense that the set of possible values of the signal at time $k$ is dictated only by its value at time $k-1$. The ideas and results in this paper can be extended to ``higher-order" specifications, in which the set of possible values of the signal at time $k$ is dictated by its value at times $k-1,k-2,\ldots,k-m$ for some $m\in \N$. We chose to restrict our attention to ``first-order" specifications in order to ease to notation and improve readability.
\end{remark}
\section{Verifying Compositional Refinement} \label{sec.Comp}
In this section, we provide computational tools for verifying that $\C_1 \otimes \C_2 \preccurlyeq \C$ for contracts $\C_1=(\D_1,\OO_1),\C_2=(\D_2,\OO_2)$ in cascade, as in Fig. \ref{fig.Cascade}, and a contract $\C=(\D,\OO)$ on the composite system. For simplicity of notation and in order to avoid confusion, we denote the input to both $\C,\C_1$ by $d(\cdot)$, the output from $\C,\C_2$ by $y(\cdot)$, and the intermediate signal by $z(\cdot)$. We assume throughout this section that $\C,\C_1,\C_2$ are all linear contracts, and write:
\begin{align} \label{eq.ThreeContracts}
    \D_1 &= \{d(\cdot): \mathfrak A^1 d(k+1) + \mathfrak A^0 d(k) \le \mathfrak a^0,~\forall k\},\\ \nonumber
    \OO_1 &= \left\{(d(\cdot),z(\cdot)): \mathfrak G^1 \left[\begin{smallmatrix} d(k+1) \\ z(k+1) \end{smallmatrix}\right] + \mathfrak G^0 \left[\begin{smallmatrix} d(k) \\ z(k) \end{smallmatrix}\right] \le \mathfrak g^0,~ \forall k\right\},\\\nonumber
    \D_2 &= \{z(\cdot): \mathfrak B^1 z(k+1) + \mathfrak B^0 z(k) \le \mathfrak b^0,~\forall k\},\\\nonumber
    \OO_2 &= \left\{(z(\cdot),y(\cdot)): \mathfrak H^1 \left[\begin{smallmatrix} z(k+1) \\ y(k+1) \end{smallmatrix}\right] + \mathfrak H^0 \left[\begin{smallmatrix} z(k) \\ y(k) \end{smallmatrix}\right] \le \mathfrak h^0,~ \forall k\right\},\\\nonumber
    \D &= \{d(\cdot): \mathfrak C^1 d(k+1) + \mathfrak C^0 d(k) \le \mathfrak c^0,~\forall k\},\\\nonumber
    \OO &= \left\{(d(\cdot),y(\cdot)): \mathfrak J^1 \left[\begin{smallmatrix} d(k+1) \\ y(k+1) \end{smallmatrix}\right] + \mathfrak J^0 \left[\begin{smallmatrix} d(k) \\ y(k) \end{smallmatrix}\right] \le \mathfrak j^0,~ \forall k\right\}.
\end{align}
Moreover, we denote $\C_1 \otimes \C_2 = \C_\otimes = (\D_\otimes,\OO_\otimes)$. Our goal is to give a computationally viable method of asserting that $\C_1\otimes \C_2 \preccurlyeq \C$, or equivalently, that $\D_\otimes \supseteq \D$ and $\OO_\otimes \cap (\D\times \Sig^{n_y}) \subseteq \OO \cap (\D\times \Sig^{n_y})$, see Fig. \ref{fig.CascadeRefinement}. These are equivalent to following implications:
\begin{itemize}
    \item If $d(\cdot) \in \D$, then $d(\cdot) \in \D_\otimes$.
    \item If $d(\cdot) \in \D$ and $(d(\cdot),y(\cdot)) \in \Omega_\otimes$, then $(d(\cdot),y(\cdot)) \in \Omega$.
\end{itemize}
In order to verify these implications, we use inductive reasoning, for which we need the following assumption, similarly to \cite{SharfADHS2020}:
\begin{defn}
Given two matrices $V^1,V^0$ and a vector $v^0$, we say $(V^1,V^0,v^0)$ is \emph{extendable} if for any two vectors $u_0,u_1$ and $V^1u_1 + V^0u_0 \le v^0$, there exists some vector $u_2$ such that $V^1 u_2 + V^0 u_1 \le v^0$.
\end{defn}
Assuming that $(\mathfrak A^1,\mathfrak A^0,\mathfrak a^0)$ (or any other triplet defining the contracts \eqref{eq.ThreeContracts}) is extendable is not very restrictive. It is equivalent to assuming that any signal $v(\cdot)$ adhering to the assumption, and defined for times $k=0,\ldots,n$, can be extended to a signal defined for all times $k\in \N$ while satisfying the assumption. Using the inductive reasoning framework, we prove the following theorem:

\begin{figure}[b]
    \centering
    \vspace{-15pt}
    \includegraphics[width=0.3\textwidth]{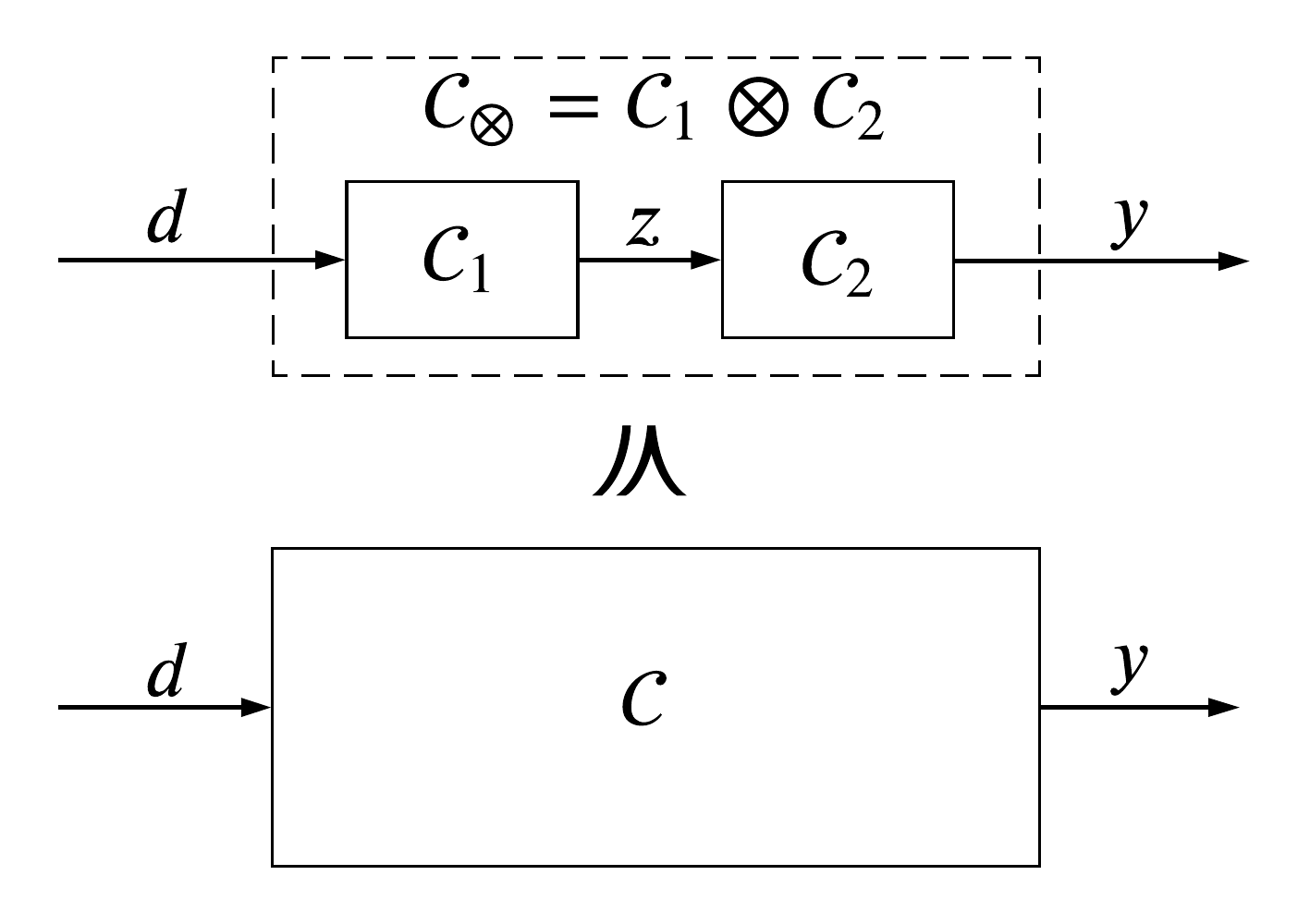}
    \caption{Cascaded Refinement, as in Theorem \ref{thm.Imp}}
        \vspace{-25pt}
    \label{fig.CascadeRefinement}
\end{figure}

\begin{thm} \label{thm.Imp}
Let $\C_1,\C_2,\C$ be contracts as in \eqref{eq.ThreeContracts}, and write $\mathfrak G^i = [\mathfrak G^i_d,\mathfrak G^i_z]$, $\mathfrak H^i = [\mathfrak H^i_z,\mathfrak H^i_y]$ for $i=0,1$. Assume that $(\mathfrak{C}^1,\mathfrak{C}^0,\mathfrak{c}^0)$, $\left(\left[\begin{smallmatrix}\mathfrak{C}^1 & 0 \\ \mathfrak{G}_d^1 & \mathfrak{G}_z^1 \end{smallmatrix}\right],\left[\begin{smallmatrix}\mathfrak{C}^0 & 0 \\ \mathfrak{G}_d^0 & \mathfrak{G}_z^0 \end{smallmatrix}\right],\left[\begin{smallmatrix} \mathfrak c^0 \\ \mathfrak g^0 \end{smallmatrix}\right]\right)$ and $\left(\left[\begin{smallmatrix}\mathfrak{C}^1 & 0 & 0 \\ \mathfrak{G}_d^1 & \mathfrak{G}_z^1 & 0 \\ 0 & \mathfrak{H}^1_z & \mathfrak{H}^1_y \end{smallmatrix}\right] ,\left[\begin{smallmatrix}\mathfrak{C}^0 & 0 & 0 \\ \mathfrak{G}_d^0 & \mathfrak{G}_z^0 & 0 \\ 0 & \mathfrak{H}^0_z &\mathfrak{H}^1_y\end{smallmatrix}\right] ,\left[\begin{smallmatrix} \mathfrak c^0 \\ \mathfrak g^0 \\ \mathfrak h^0 \end{smallmatrix}\right]\right)$ are all extendable.
Then, $\C_1 \otimes \C_2 \preccurlyeq \C$ if and only if the following three implications hold for any $d_0,d_1,z_0,z_1,y_0,y_1$:
\begin{enumerate}
    \item[i)] If $\mathfrak C^1 d_1 + \mathfrak C^0 d_0 \le \mathfrak c^0$, then $\mathfrak A^1 d_1 + \mathfrak A^0 d_0 \le \mathfrak a^0$.
    \item[ii)] If $\mathfrak C^1 d_1 + \mathfrak C^0 d_0 \le \mathfrak c^0$ and $\mathfrak G^1 \left[\begin{smallmatrix} d_1 \\ z_1 \end{smallmatrix}\right] +\mathfrak G^0 \left[\begin{smallmatrix} d_0 \\ z_0 \end{smallmatrix}\right] \le\mathfrak g^0$, then $\mathfrak B^1 z_1 + \mathfrak B^0 z_0 \le \mathfrak b^0$.
    \item[iii)] If $\mathfrak C^1 d_1 + \mathfrak C^0 d_0 \le \mathfrak c^0$, $\mathfrak G^1 \left[\begin{smallmatrix} d_1 \\ z_1 \end{smallmatrix}\right] + \mathfrak G^0 \left[\begin{smallmatrix} d_0 \\ z_0 \end{smallmatrix}\right] \le \mathfrak g^0$, and $\mathfrak H^1 \left[\begin{smallmatrix} z_1 \\ y_1 \end{smallmatrix}\right] + \mathfrak H^0 \left[\begin{smallmatrix} z_0 \\ y_0 \end{smallmatrix}\right] \le \mathfrak h^0$, then $\mathfrak J^1 \left[\begin{smallmatrix} d_1 \\ y_1 \end{smallmatrix}\right] + \mathfrak J^0 \left[\begin{smallmatrix} d_0 \\ y_0 \end{smallmatrix}\right] \le \mathfrak j^0$.
\end{enumerate}
\end{thm}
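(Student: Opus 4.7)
The plan is to prove the two directions separately. The conditions (i)--(iii) are nothing other than the one-step (pointwise) counterparts of the two set-inclusions $\D \subseteq \D_\otimes$ and $\OO_\otimes \cap (\D \times \Sig^{n_y}) \subseteq \OO \cap (\D \times \Sig^{n_y})$ that together define $\C_1 \otimes \C_2 \preccurlyeq \C$. The direction ``(i)--(iii) imply refinement'' follows immediately after unpacking the definitions, while the converse requires the extendability hypotheses in order to convert finite two-step data back into full signals on $\N$.

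For the easy direction, suppose (i)--(iii) hold. Take any $d(\cdot) \in \D$. Applying (i) at every $k$ gives $d(\cdot)\in\D_1$. If moreover some $z(\cdot)$ satisfies $(d,z)\in\OO_1$, then applying (ii) at every $k$ shows $z(\cdot)\in\D_2$, so $d(\cdot)\in\D_\otimes$ by Definition \ref{def.Composition}, proving $\D \subseteq \D_\otimes$. Next, if $(d,y)\in\OO_\otimes\cap(\D\times\Sig^{n_y})$, then by definition of $\OO_\otimes$ there exists $z(\cdot)$ with $(d,z)\in\OO_1$ and $(z,y)\in\OO_2$, and (iii) applied at every $k$ gives $(d,y)\in\OO$. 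Together these two inclusions yield $\C_1\otimes\C_2\preccurlyeq\C$.

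For the converse, assume $\C_1 \otimes \C_2 \preccurlyeq \C$ and fix vectors $d_0,d_1,z_0,z_1,y_0,y_1$ satisfying the respective hypotheses of (i), (ii), or (iii). The idea is to realize them as the first two samples of signals on $\N$ that witness the relevant inclusions, after which the refinement assumption delivers the desired inequality at $k=0$. For (i), extendability of $(\mathfrak C^1,\mathfrak C^0,\mathfrak c^0)$ allows one to inductively produce $d(2),d(3),\ldots$ so that $d(\cdot)\in\D$ with $d(0)=d_0$ and $d(1)=d_1$; since $\D\subseteq\D_\otimes\subseteq\D_1$, one reads off $\mathfrak A^1 d_1 + \mathfrak A^0 d_0 \le \mathfrak a^0$. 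For (ii), applying the second extendability hypothesis to the stacked vector of $d$ and $z$ inductively extends $(d_0,z_0),(d_1,z_1)$ to full signals with $d(\cdot)\in\D$ and $(d,z)\in\OO_1$; then $d\in\D_\otimes$ combined with $(d,z)\in\OO_1$ forces $z\in\D_2$ by Definition \ref{def.Composition}, which at $k=0$ is the desired inequality. Part (iii) is analogous, using the third extendability hypothesis on the triple $(d,z,y)$ to produce signals with $d\in\D$, $(d,z)\in\OO_1$, and $(z,y)\in\OO_2$, so that $(d,y)\in\OO_\otimes$; refinement then gives $(d,y)\in\OO$, and evaluation at $k=0$ closes the argument.

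The main obstacle is precisely this necessity direction: (i)--(iii) involve only two time instants, whereas refinement is a property of entire signals, so one must lift one-step data to a signal on all of $\N$ satisfying all constraints simultaneously. This is exactly the role of the three progressively larger extendability hypotheses, which provide the inductive extension step for $d$ alone, for the joint pair $(d,z)$ subject to both the $\D$-constraint and the $\OO_1$-constraint, and for the triple $(d,z,y)$ subject to all three constraints, respectively. Once these extensions are available, the rest is essentially bookkeeping with the definitions of $\D_\otimes$ and $\OO_\otimes$.
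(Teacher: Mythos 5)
Your proposal is correct and follows essentially the same route as the paper's proof: the sufficiency direction applies the pointwise implications at every time step, and the necessity direction uses the three extendability hypotheses to lift two-step data for $d$, $(d,z)$, and $(d,z,y)$ to full signals witnessing membership in $\D$, $\OO_1$, and $\OO_2$, then reads off the desired inequality at $k=0$. The only cosmetic difference is that the paper organizes the argument as two separate equivalences (i--ii with the assumption inclusion, iii with the guarantee inclusion) rather than as the two directions of the overall biconditional, but the content is identical.
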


\begin{proof}
We prove that, under the extendability assumptions of the theorem, the first two implications are equivalent to $\D \subseteq \D_\otimes$, and the third is equivalent to $\OO_\otimes \cap (\D\times \Sig^{n_y}) \subseteq \OO \cap (\D\times \Sig^{n_y})$. We start with the former equivalence.

Suppose first that the implications i) and ii) hold. We take $d(\cdot)\in \D$ and show that $d(\cdot)\in \D_\otimes$. As $d(\cdot)\in \D_1$, we have $\mathfrak C^1 d(k+1) + \mathfrak C^0 d(k) \le \mathfrak c^0$ for any $k\in \N$, meaning that $\mathfrak A^1 d(k+1) + \mathfrak A^0 d(k) \le \mathfrak a^0$ holds by i). Moreover, we take some $z(\cdot)\in \Sig^{n_z}$ such that $(d(\cdot),z(\cdot))\in \Omega_1$, and claim that $z(\cdot)\in \D_2$. Indeed, for any $k\in \N$, the two inequalities:

\vspace{-8pt}
\small
\begin{align*}
    \mathfrak C^1 d(k+1) + \mathfrak C^0 d(k) \le \mathfrak c^0,~
    \mathfrak G^1 \begin{bmatrix} d(k+1) \\ z(k+1) \end{bmatrix} +\mathfrak G^0 \begin{bmatrix} d(k) \\ z(k) \end{bmatrix} \le\mathfrak g^0
\end{align*}\normalsize
imply that $\mathfrak B^1 z(k+1) + \mathfrak B^0 z(k) \le \mathfrak b^0$ holds by ii). Combining both claims, we conclude that $d\in \D_\otimes$.

Conversely, we assume $\D_\otimes \supseteq \D$, and show that both implications i) and ii) hold. Suppose first that $d_0,d_1\in \R^{n_d}$ satisfy $\mathfrak C^1 d_1 + \mathfrak C^0 d_0 \le \mathfrak c^0$. By extendibility, we can find a signal $d(\cdot)\in \D$ such that $d(0)=d_0,d(1)=d_1$. As $\D_\otimes \supseteq \D$, we conclude that $d(\cdot)\in \D_1$, so $\mathfrak A^1 d_1 + \mathfrak A^0 d_0 \le \mathfrak a^0$ is achieved at time $k=0$. Now, assume that $d_0,d_1,y_0,y_1$ satisfy both inequalities:
\begin{align*}
    \mathfrak C^1 d_1 + \mathfrak C^0 d_0 \le \mathfrak c^0,~
    \mathfrak G^1 \begin{bmatrix} d_1 \\ z_1 \end{bmatrix} +\mathfrak G^0 \begin{bmatrix} d_0 \\ z_0 \end{bmatrix} \le\mathfrak g^0
\end{align*}
which can be concisely written as:
\begin{align*}
    \begin{bmatrix}
    \mathfrak{C}^1 & 0 \\ \mathfrak{G}_d^1 & \mathfrak{G}_z^1
    \end{bmatrix}
    \begin{bmatrix}
    d_1 \\ z_1
    \end{bmatrix} + 
    \begin{bmatrix}
    \mathfrak{C}^0 & 0 \\ \mathfrak{G}_d^0 & \mathfrak{G}_z^0
    \end{bmatrix}
    \begin{bmatrix}
    d_0 \\ z_0
    \end{bmatrix} \le 
    \begin{bmatrix}
    \mathfrak c^0 \\ \mathfrak g^0
    \end{bmatrix} 
\end{align*}
By extendibility, we find a signal $\mu(\cdot) = [d(\cdot)^\top ,z(\cdot)^\top]^\top\in \Sig^{n_d+n_z}$ such that $d(0)=d_0,d(1)=d_1,z(0)=z_0,z(1)=z_1$, and for all $k\in \N$, we have:
\begin{align*}
    \begin{bmatrix}
    \mathfrak{C}^1 & 0 \\ \mathfrak{G}_d^1 & \mathfrak{G}_z^1
    \end{bmatrix}
    \begin{bmatrix}
    d(k+1) \\ z(k+1)
    \end{bmatrix} + 
    \begin{bmatrix}
    \mathfrak{C}^0 & 0 \\ \mathfrak{G}_d^0 & \mathfrak{G}_z^0
    \end{bmatrix}
    \begin{bmatrix}
    d(k) \\ z(k)
    \end{bmatrix} \le 
    \begin{bmatrix}
    \mathfrak c^0 \\ \mathfrak g^0
    \end{bmatrix} 
\end{align*}
Thus, $d(\cdot)\in \D\subseteq \D_\otimes$ and $(d(\cdot),z(\cdot))\in \Omega_1$, and we conclude that $z(\cdot)\in \D_2$ by the definition of $\D_\otimes$. Hence, for time $k=0$, we get $\mathfrak B^1 z_1 + \mathfrak B^0 z_0 \le \mathfrak b^0$, as desired.
\\\\
We now show that the implication iii) is equivalent to $\OO_\otimes \cap (\D\times \Sig^{n_y}) \subseteq \OO \cap (\D\times \Sig^{n_y})$. Suppose first that iii) holds, and show that the inclusion holds. Take any $(d(\cdot),y(\cdot)) \in \OO_\otimes \cap (\D\times \Sig^{n_y})$, so that $d(\cdot)\in \D$ and $(d(\cdot),y(\cdot))\in \OO_\otimes$. By Definition \ref{def.Composition}, there exists some signal $z(\cdot)$ such that $(d(\cdot),z(\cdot))\in \Omega_1$ and $(z(\cdot),y(\cdot)) \in \Omega_2$. Thus, the following inequalities hold for all $k\in \N$:

\vspace{-8pt}
\begin{align*}
    &\mathfrak C^1 d(k+1) + \mathfrak C^0 d(k) \le \mathfrak c^0,\\
    &\mathfrak G^1 \begin{bmatrix} d(k+1) \\ z(k+1) \end{bmatrix} + \mathfrak G^0 \begin{bmatrix} d(k) \\ z(k) \end{bmatrix} \le \mathfrak g^0,\\
    &\mathfrak H^1 \begin{bmatrix} z(k+1) \\ y(k+1) \end{bmatrix} + \mathfrak H^0 \begin{bmatrix} z(k) \\ y(k) \end{bmatrix} \le \mathfrak h^0.
\end{align*}
We use iii) and conclude that for any $k\in\N$, we have:
\begin{align*}
    \mathfrak J^1 \begin{bmatrix} d(k+1) \\ y(k+1) \end{bmatrix} + \mathfrak J^0 \begin{bmatrix} d(k) \\ y(k) \end{bmatrix} \le \mathfrak j^0
\end{align*}
In particular, $(d(\cdot),y(\cdot)) \in \OO$, as desired.

Conversely, assume that $\OO_\otimes \cap (\D\times \Sig^{n_y}) \subseteq \OO \cap (\D\times \Sig^{n_y})$. Let $d_0,d_1,z_0,z_1,y_0,y_1$ be such that the three inequalities
\begin{align*}
    &\mathfrak C^1 d_1 + \mathfrak C^0 d_0 \le \mathfrak c^0, \\
    &\mathfrak G^1 \begin{bmatrix} d_1 \\ z_1 \end{bmatrix} + \mathfrak G^0 \begin{bmatrix} d_0 \\ z_0 \end{bmatrix} \le \mathfrak g^0,\\
    &\mathfrak H^1 \begin{bmatrix} z_1 \\ y_1 \end{bmatrix} + \mathfrak H^0 \begin{bmatrix} z_0 \\ y_0 \end{bmatrix} \le \mathfrak h^0,
\end{align*}
all hold. We can write these inequalities compactly as:
\begin{align*}
    \begin{bmatrix}
    \mathfrak{C}^1 & 0 & 0 \\ \mathfrak{G}_d^1 & \mathfrak{G}_z^1 & 0 \\ 0 & \mathfrak{H}^1_z & \mathfrak{H}^1_y
    \end{bmatrix}
    \begin{bmatrix}
    d_1 \\ z_1 \\ y_1
    \end{bmatrix} + 
    \begin{bmatrix}
    \mathfrak{C}^0 & 0 & 0 \\ \mathfrak{G}_d^0 & \mathfrak{G}_z^0 & 0 \\ 0 & \mathfrak{H}^0_z & \mathfrak{H}^0_y    \end{bmatrix}
    \begin{bmatrix}
    d_0 \\ z_0 \\ y_0
    \end{bmatrix} \le 
    \begin{bmatrix}
    \mathfrak c^0 \\ \mathfrak g^0 \\ \mathfrak h^0
    \end{bmatrix} 
\end{align*}
By extendibility, we find $\kappa(\cdot) = [d(\cdot)^\top ,z(\cdot)^\top,y(\cdot)^\top]^\top\in \Sig^{n_d+n_z+n_y}$ such that $d(i)=d_i,z(i)=z_i$ and $y(i)=y_i$ all hold for $i=0,1$, and for all $k\in \N$, we have:

\vspace{-8pt}
\footnotesize
\begin{align*}
    \begin{bmatrix}
    \mathfrak{C}^1 & 0 & 0 \\ \mathfrak{G}_d^1 & \mathfrak{G}_z^1 & 0 \\ 0 & \mathfrak{H}^1_z & \mathfrak{H}^1_y
    \end{bmatrix}
    \begin{bmatrix}
    d(k+1) \\ z(k+1) \\ y(k+1)
    \end{bmatrix} + 
    \begin{bmatrix}
    \mathfrak{C}^0 & 0 & 0 \\ \mathfrak{G}_d^0 & \mathfrak{G}_z^0 & 0 \\ 0 & \mathfrak{H}^0_z & \mathfrak{H}^0_y    \end{bmatrix}
    \begin{bmatrix}
    d(k) \\ z(k) \\ y(k)
    \end{bmatrix} \le 
    \begin{bmatrix}
    \mathfrak c^0 \\ \mathfrak g^0 \\ \mathfrak h^0
    \end{bmatrix} 
\end{align*}
\normalsize
Thus, we conclude that $d(\cdot)\in \D$, $(d(\cdot),z(\cdot))\in \Omega_1$ and $(z(\cdot),y(\cdot))\in \Omega_2$. In particular, by Definition \ref{def.LinCon}, we have that $(d(\cdot),y(\cdot)) \in \Omega_\otimes \cap (\D\times \Sig^{n_y})$. In turn, we conclude that implies that $(d(\cdot),y(\cdot)) \in \Omega \cap (\D\times \Sig^{n_y})$, and for time $k=0$, we yield 
\vspace{-5pt}
\begin{align*}
    \mathfrak J^1 \begin{bmatrix} d_1 \\ y_1 \end{bmatrix} + \mathfrak J^0 \begin{bmatrix} d_0 \\ y_0 \end{bmatrix} \le \mathfrak j^0
\end{align*}
This completes the proof.
\end{proof}

\begin{remark} \label{rem.Delay}
Theorem \ref{thm.Imp} assumes that the contracts in \eqref{eq.ThreeContracts} define the assumptions and guarantees for all times $k\in \N$, and in particular, for the same times $k$. The case study in section \ref{sec.Num} includes a contract on a delayed observer, for which the input specifications are defined for all $k\ge 0$, but the output specifications are only defined for times $k\ge 1$. This is the case whenever $\mathfrak G^0_y = 0$. In this case, the theorem is amended by considering implications for longer parts of the signal. For example, the implication ii) would be replaced by the following: If $\mathfrak C^1 d_{i+1} + \mathfrak C^0 d_i \le \mathfrak c^0$ and $\mathfrak G^1 \left[\begin{smallmatrix} d_{i+1} \\ z_{i+1} \end{smallmatrix}\right] +\mathfrak G^0 \left[\begin{smallmatrix} d_i \\ z_i \end{smallmatrix}\right] \le\mathfrak g^0$ hold for $i=0,1$, then $\mathfrak B^1 z_2 + \mathfrak B^0 z_1 \le \mathfrak b^0$. Implication iii) is similarly treated.
\end{remark}

Theorem \ref{thm.Imp} allows one to verify that a given composition of cascaded contracts refines a contract on the composite system by proving three implications, each of them can be cast as an optimization problem:
\begin{prop} \label{prop.Opt}
Suppose that the assumptions of Theorem \ref{thm.Imp} hold. The refinement $\C_1 \otimes \C_2 \preccurlyeq \C$ holds if and only if $\varrho_\D,\varrho_\otimes,\varrho_\OO$, which are defined using the following optimization problems, are all non-positive:
\begin{align*}
    \varrho_\D = \max ~&~ \max_j \left[{\rm e_j}^\top \left(\mathfrak A^1 d_1 + \mathfrak A^0 d_0 - \mathfrak a^0\right)\right]\\
    {\rm s.t.}~&~ \mathfrak C^1 d_1 + \mathfrak C^0 d_0 \le \mathfrak c^0,\\
    ~&~ d_0,d_1 \in \R^{n_d},
\end{align*}
\begin{align*}
    \varrho_\otimes = \max ~&~ \max_j \left[{\rm e_j}^\top \left(\mathfrak B^1 z_1 + \mathfrak B^0 z_0 - \mathfrak b^0\right)\right]\\
    {\rm s.t.}~&~ \mathfrak C^1 d_1 + \mathfrak C^0 d_0 \le \mathfrak c^0,\\
    ~&~\mathfrak G^1 \left[\begin{smallmatrix} d_1 \\ z_1 \end{smallmatrix}\right] +\mathfrak G^0 \left[\begin{smallmatrix} d_0 \\ z_0 \end{smallmatrix}\right] \le\mathfrak g^0\\
    ~&~ d_0,d_1 \in \R^{n_d},~z_0,z_1 \in \R^{n_z},
\end{align*}
\begin{align*}
    \varrho_\OO = \max ~&~ \max_j \left[{\rm e_j}^\top \left(\mathfrak J^1 \left[\begin{smallmatrix} d_1 \\ y_1 \end{smallmatrix}\right] + \mathfrak J^0 \left[\begin{smallmatrix} d_0 \\ y_0 \end{smallmatrix}\right] - \mathfrak j^0\right)\right]\\
    {\rm s.t.}~&~ \mathfrak C^1 d_1 + \mathfrak C^0 d_0 \le \mathfrak c^0,\\
    ~&~\mathfrak G^1 \left[\begin{smallmatrix} d_1 \\ z_1 \end{smallmatrix}\right] +\mathfrak G^0 \left[\begin{smallmatrix} d_0 \\ z_0 \end{smallmatrix}\right] \le\mathfrak g^0\\
    ~&~\mathfrak H^1 \left[\begin{smallmatrix} z_1 \\ y_1 \end{smallmatrix}\right] + \mathfrak H^0 \left[\begin{smallmatrix} z_0 \\ y_0 \end{smallmatrix}\right] \le \mathfrak h^0\\
    ~&~ d_0,d_1 \in \R^{n_d},~z_0,z_1 \in \R^{n_z},~y_0,y_1\in \R^{n_y},
\end{align*}
where ${\rm e_i}$ are the standard basis vectors.
\end{prop}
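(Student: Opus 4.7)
The plan is to reduce Proposition \ref{prop.Opt} directly to Theorem \ref{thm.Imp}, whose extendability assumptions are inherited, and then observe that each of the three universally-quantified implications i), ii), iii) can be certified by checking the sign of a finite collection of linear programs. Specifically, I would use the elementary fact that an implication of the form ``for all $x$ satisfying $Px \le q$, the inequality $Rx \le s$ holds'' is equivalent to the condition that, for every row index $j$, the optimal value of
\begin{align*}
\max ~&~ {\rm e_j}^\top(Rx - s) \\
{\rm s.t.} ~&~ Px \le q
\end{align*}
is non-positive. Equivalently, moving the maximum over $j$ outside, the implication holds iff $\max_x \max_j \left[ {\rm e_j}^\top (Rx - s) \right] \le 0$ over the feasible polyhedron. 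Since the outer maximum can be swapped with the (finite) maximum over $j$, this amounts to solving one linear program per row of $R$ and aggregating.

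Applying this observation to each of the implications supplied by Theorem \ref{thm.Imp} yields exactly the three optimization problems of the proposition. Implication i) concerns the polyhedron $\{(d_0,d_1) : \mathfrak C^1 d_1 + \mathfrak C^0 d_0 \le \mathfrak c^0\}$ with target rows drawn from $\mathfrak A^1 d_1 + \mathfrak A^0 d_0 \le \mathfrak a^0$, which produces $\varrho_\D$. Implication ii) uses the enlarged polyhedron in variables $(d_0,d_1,z_0,z_1)$ with the constraints of i) augmented by the guarantee inequality of $\OO_1$, producing $\varrho_\otimes$. Implication iii) uses the further enlargement including $(y_0,y_1)$ and the guarantee inequality of $\OO_2$, producing $\varrho_\OO$. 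In each case the equivalence is line-by-line: $\varrho_\bullet \le 0$ iff the corresponding universally-quantified implication holds.

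Combining these three equivalences with Theorem \ref{thm.Imp}, we get $\C_1 \otimes \C_2 \preccurlyeq \C$ iff all three implications hold iff $\varrho_\D, \varrho_\otimes, \varrho_\OO$ are all non-positive. There is essentially no hard step here; the real content is already in Theorem \ref{thm.Imp}, and the remaining work is only the routine translation of a polyhedral implication into the non-positivity of a worst-case linear objective. The one point worth flagging is the handling of degenerate optima: if a feasible region is empty we take $\varrho_\bullet = -\infty \le 0$ (consistent with a vacuously true implication), while an unbounded objective gives $\varrho_\bullet = +\infty$ and correctly falsifies the implication, so the characterization remains valid in edge cases.
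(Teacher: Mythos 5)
Your proposal is correct and follows essentially the same route as the paper: invoke Theorem \ref{thm.Imp} and observe that each universally-quantified polyhedral implication holds precisely when the worst-case violation of the target inequality, maximized over the feasible set and over the rows, is non-positive. Your extra remark on empty or unbounded feasible regions is a harmless (and sensible) addition that the paper leaves implicit.
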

\begin{proof}
Implication i) in Theorem \ref{thm.Imp} holds if and only if $\mathfrak C^1 d_1 + \mathfrak C^0 d_0 \le \mathfrak c^0$ implies $\mathfrak A^1 d_1 + \mathfrak A^0 d_0 - \mathfrak a^0 \le 0$, which holds if and only if $\varrho_\D \le 0$. Similarly, implication ii) holds if and only if $\varrho_{\otimes} \le 0$, and implication iii) in holds if and only if $\varrho_{\OO}\le 0$.
\end{proof}

Proposition \ref{prop.Opt} suggests a method for verifying whether $\C_1\otimes\C_2 \preccurlyeq \C$ is true or false by solving three optimization problems, and checking whether their optimal values are non-positive. These optimization problems can be recast as linear programs. Indeed, fixing some value of $j$ and removing the inner maximization results in a linear program. Denoting the value of these linear programs by $\vartheta_{\D,j},\vartheta_{\otimes,j},\vartheta_{\OO,j}$ respectively, we see that $\varrho_\D = \max_j \vartheta_{\D,j}, \varrho_\otimes = \max_j \vartheta_{\otimes,j}$ and $\varrho_\OO = \max_j \vartheta_{\OO,j}$. In particular, $\varrho_{\D},\varrho_{\otimes},\varrho_{\OO} \le 0$ if and only if $\vartheta_{\D,j},\vartheta_{\otimes,j},\vartheta_{\OO,j}\le 0$ for each admissible $j$. We conclude:

\begin{cor} \label{cor.LPopt}
Suppose that the assumptions of Theorem \ref{thm.Imp} hold, and let $s_{\mathfrak A},s_{\mathfrak B},s_{\mathfrak J}$ be the number of rows in $\mathfrak{A}^1,\mathfrak{B}^1$ and $\mathfrak{J}^1$, respectively. The refinement $\C_1 \otimes \C_2 \preccurlyeq \C$ holds if and only if for any $i=1,\ldots,s_{\mathfrak A}$, $j=1,\ldots,s_{\mathfrak B}$ and any $\ell = 1,\ldots,s_{\mathfrak J}$, we have $\vartheta_{\D,i},\vartheta_{\otimes,j},\vartheta_{\OO,\ell}\le 0$. In particular, deciding whether $\C_1\otimes \C_2 \preccurlyeq \C$ holds or not can be achieved by solving $s_{\mathfrak A}+s_{\mathfrak B} + s_{\mathfrak J}$ linear programs.
\end{cor}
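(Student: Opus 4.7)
The plan is to combine Proposition~\ref{prop.Opt} with a routine decomposition of each of the three optimization problems into a family of linear programs indexed by the rows of the matrices $\mathfrak A^1$, $\mathfrak B^1$, and $\mathfrak J^1$. First I would invoke Proposition~\ref{prop.Opt} to reduce $\C_1\otimes\C_2\preccurlyeq\C$ to the three conditions $\varrho_\D\le 0$, $\varrho_\otimes\le 0$, and $\varrho_\OO\le 0$; this already handles the harder direction of the equivalence, since the extendability assumptions inherited from Theorem~\ref{thm.Imp} are used only there.

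Next I would exploit the fact that each of $\varrho_\D$, $\varrho_\otimes$, $\varrho_\OO$ has the nested form $\max_{\text{decision vars}} \max_j {\rm e}_j^\top(\cdot)$ over a polyhedron (defined by the listed affine constraints). Swapping the two maxima, which is always permissible, gives $\varrho_\D = \max_i \vartheta_{\D,i}$, $\varrho_\otimes = \max_j \vartheta_{\otimes,j}$, and $\varrho_\OO = \max_\ell \vartheta_{\OO,\ell}$, where for each fixed index the inner problem has a single linear objective ${\rm e}_i^\top(\mathfrak A^1 d_1 + \mathfrak A^0 d_0 - \mathfrak a^0)$ (respectively for $\mathfrak B$, $\mathfrak J$) and linear inequality constraints, i.e.\ a genuine linear program. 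The admissible index ranges are exactly $i\in\{1,\ldots,s_{\mathfrak A}\}$, $j\in\{1,\ldots,s_{\mathfrak B}\}$, $\ell\in\{1,\ldots,s_{\mathfrak J}\}$, matching the number of rows of $\mathfrak A^1$, $\mathfrak B^1$, $\mathfrak J^1$ (which equal the dimensions of $\mathfrak a^0$, $\mathfrak b^0$, $\mathfrak j^0$).

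Finally I would observe that the maximum of a finite collection of real numbers is non-positive iff each element is non-positive, so $\varrho_\D\le 0$ iff $\vartheta_{\D,i}\le 0$ for every $i$, and analogously for the other two quantities. Combining this with the equivalence from Proposition~\ref{prop.Opt} yields both the claimed characterization of $\C_1\otimes\C_2\preccurlyeq\C$ and the bookkeeping that a total of $s_{\mathfrak A}+s_{\mathfrak B}+s_{\mathfrak J}$ linear programs suffices. The only thing that might trip one up is making sure the index ranges are identified correctly (they are set by the number of rows of $\mathfrak a^0$, $\mathfrak b^0$, $\mathfrak j^0$, equivalently of $\mathfrak A^1$, $\mathfrak B^1$, $\mathfrak J^1$), and that the inner LPs are always feasible so that their values $\vartheta_{\cdot,\cdot}$ are well-defined; feasibility follows since $(d_0,d_1,z_0,z_1,y_0,y_1)=0$ lies in the corresponding polyhedron whenever the right-hand sides are non-negative, and in the general case one may appeal to the extendability hypothesis to guarantee non-emptiness of the constraint sets that matter.
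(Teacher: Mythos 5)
Your proposal is correct and follows essentially the same route as the paper, which proves the corollary in the paragraph preceding it: fix each row index to obtain a linear program, note that $\varrho_\D,\varrho_\otimes,\varrho_\OO$ are the maxima of the resulting finitely many LP values, and combine with Proposition \ref{prop.Opt}. Your added remark on feasibility of the inner LPs is not needed (an empty feasible set makes the corresponding implication hold vacuously), but it does no harm.
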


To conclude this section, we presented a method for determining whether $\C_1 \otimes \C_2 \preccurlyeq \C$ holds, where $\C_1,\C_2,\C$ are of the form \eqref{eq.ThreeContracts}, by solving prescribed linear programs. These can be efficiently solved using off-the-shelf optimization software (e.g., Yalmip \cite{Lofberg2004}). Moreover, the number of optimization problems grows linearly with the number of rows in the matrices defining the contracts, which can be understood as the number of different half-space constraints required to define the assumptions and guarantees. Thus, the presented verification method is computationally viable.

\begin{figure}[b]
    \centering
    \includegraphics[width = 0.47\textwidth]{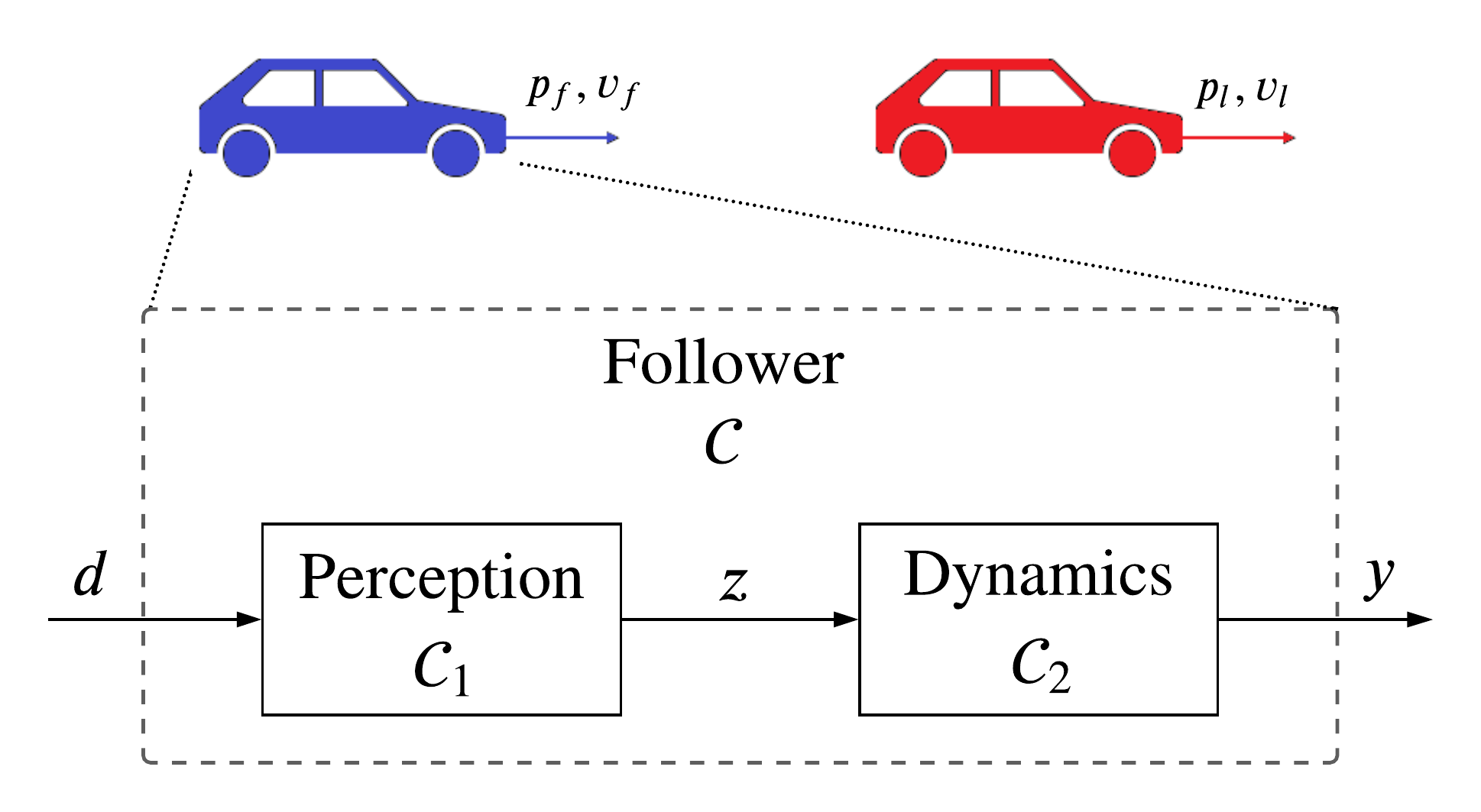}
    \caption{The two-vehicle scenario of Section \ref{sec.Num}.}
    \label{fig.TwoCar}
\end{figure}

\section{Numerical Example} \label{sec.Num}
We now demonstrate the results of the previous section using a case study regarding safety in a car-follower scenario.

\subsection{Problem Statement}
Consider two vehicles driving along a single-lane highway, as in Fig. \ref{fig.TwoCar}. We are given a headway $h>0$, and a specification that the follower keeps at least the given headway from the leader. Denoting the position and velocity of the follower by $p_f(k)$, $v_f(k)$, and the position and velocity of the leader by $p_l(k),v_l(k)$, the specification is that $p_l(k) - p_f(k) - hv_f(k) \ge 0$ holds at any time $k\in \N$. 

We will decompose the follower vehicle into two subsystems - a perception system, which provides noisy, delayed measurements of the leader vehicle, and a dynamics subsystem, driving the follower according to the provided measurements. We will first use the framework of assume/guarantee contracts and the tools provided in Section \ref{sec.Comp} to derive specifications on both subsystems. Later, use the framework of \cite{SharfADHS2020} to verify that a given model satisfies these specifications. Lastly, we demonstrate our findings using a simulation.

\subsection{Assume-Guarantee Specifications}
Let us first define the contract $\C$ on the composite system, i.e. the follower vehicle. The input $d(\cdot)$ is given by $d(k) = [p_l(k),v_l(k)]$, and we assume the leader satisfies the following kinematic relations for all $k\in \N$:
\begin{align} \label{eq.Kinematics}
&p_l(k+1) = p_l(k) + \Delta t v_l(k),\\ \nonumber &v_l(k+1) = v_l(k) + \Delta ta_l(k),\\ \nonumber
&a_l(k) \in [-a_{\rm min},a_{\rm max}]
\end{align}
where $a_l(k)$ is the acceleration of the leading vehicle and $\Delta t > 0$ is the length of a discrete time step. We also assume that $v_l(k) \ge 0$ at any time $k$. The output $y(\cdot)$ from the composite system is given by $y(k) = [p_f(k),v_f(k)]$, and we wish to guarantee that headway is kept, i.e. that $p_l(k) - p_f(k) - hv_f(k) \ge 0$ holds for any $k\in \N$. The contract defined above is linear and is given by the matrices:
\begin{align*}
&\mathfrak C^1 = \left[\begin{smallmatrix} 1 & 0 \\ -1 & 0 \\ 0 & 1 \\ 0 & -1 \\ 0 & 0 \end{smallmatrix}\right],
&&\mathfrak C^0 = \left[\begin{smallmatrix} -1 & -\Delta t \\ 1 & \Delta t \\ 0 & -1 \\ 0 & 1 \\ 0 & -1 \end{smallmatrix}\right]
&&\mathfrak c^0 = \left[\begin{smallmatrix} 0 \\ 0 \\ \Delta t a_{\rm max} \\ \Delta t a_{\rm min} \\ 0\end{smallmatrix}\right], \\
&\mathfrak J^1 = \left[\begin{smallmatrix} 0 & 0 & 0 & 0\end{smallmatrix}\right], 
&&\mathfrak J^0 = \left[\begin{smallmatrix} -1 & 0 & 1 & h\end{smallmatrix}\right], 
&&\mathfrak j^0 = [0].
\end{align*}
\\ \vspace{-15pt}

We now consider the different subsystems of the follower, as in Fig. \ref{fig.TwoCar}. We start by defining a contract $\C_1$ on the perception system. The input $d(\cdot)$ is the same as before, and the output $z(\cdot)$ is composed of measurements of the position and velocity of the leading vehicle, denoted as $z(k) = [p_m(k),v_m(k)]^\top$. We assume as before that the leader vehicle follows the laws of kinematics, i.e. that \eqref{eq.Kinematics} is satisfied, and that the velocity of the leader is always non-negative. As for guarantees, we allow the measurements to have delay and noise, up to a bounded amount. We assume that the delay is no larger than one time step $\Delta t$ and use a linear interpolation as a model for the delay. Namely, we denote the maximum allowable delay by $\tau\le \Delta t$, and the bounds on the allowable noise in the position and velocity measurements by $\delta_p,\delta_v$, and assume that for any time $k\in \N$,
\begin{align*}
    p_m(k+1) &= \left(1-\frac{\sigma_{k+1}^p}{\Delta t}\right) p_l(k+1) + \frac{\sigma^p_{k+1}}{\Delta t}p_l(k) + \nu_{p}(k)\\
    v_m(k+1) &=    \left(1-\frac{\sigma_{k+1}^v}{\Delta t}\right) v_l(k+1) + \frac{\sigma^v_{k+1}}{\Delta t}v_l(k) + \nu_{v}(k)
\end{align*}
where $\sigma^p_{k+1},\sigma^v_{k+1} \in [0,\tau]$ are the delays in the position and velocity measurements, respectively, $\nu_p(k),\nu_v(k)$ are the measurement noises, and $|\nu_p(k)|\le \delta_p, |\nu_v(k)|\le \delta_v$. We also guarantee that $v_m(k) \ge 0$ for all $k$. We show that this contract is linear. Using the assumptions, i.e. the kinematic relations \eqref{eq.Kinematics} and $v_l(k) \ge 0$, we restate these guarantees as:
\begin{align}\label{eq.pRestate}
    &p_m(k+1) = p_l(k+1) - s^p_{k+1} \Delta t v_l(k) + \nu_p(k)
    \\ \label{eq.vRestate}
    &v_m(k+1) = v_l(k+1) - s^v_{k+1} \Delta t a_l(k) + \nu_v(k)
\end{align}
where $s_{k+1}^p,s_{k+1}^v \in [0,\frac{\tau}{\Delta t}]$ are given by $s_{k+1}^p = \frac{\sigma^p_{k+1}}{\Delta t}$ and $s_{k+1}^v = \frac{\sigma^v_{k+1}}{\Delta t}$. Using the assumption $a_l(k) \in [-a_{\rm min},a_{\rm max}]$, \eqref{eq.vRestate} can be restated as a pair of inequalities:
\begin{align}\label{eq.vFinalForm}
    &v_m(k+1)\ge v_l(k+1)-\tau a_{\rm max} - \delta_v\\ \nonumber
    &v_m(k+1)\le v_l(k+1)+\tau a_{\rm min} + \delta_v,
\end{align}
and using the assumption $v_l(k)\ge 0$ allows one to restate \eqref{eq.pRestate} as the following pair of inequalities:
\begin{align}\label{eq.pFinalForm}
    &p_m(k+1) \ge p_l(k+1) - \tau v_l(k) - \delta_p\\ \nonumber
    &p_m(k+1) \le p_l(k+1) + \delta_p.
\end{align}
To conclude, we choose the contract $\C_1$ on the perception system with assumptions \eqref{eq.Kinematics} and $v_l(k)\ge 0$ for all $k\in \N$, and guarantees \eqref{eq.vFinalForm}, \eqref{eq.pFinalForm} and $v_m(k) \ge 0$ for all $k\in \N$. This contract is linear, and is given by the matrices:
\begin{align*}
&\mathfrak A^1 = \left[\begin{smallmatrix} 1 & 0 \\ -1 & 0 \\ 0 & 1 \\ 0 & -1 \\ 0 & 0 \end{smallmatrix}\right],
\mathfrak A^0 = \left[\begin{smallmatrix} -1 & -\Delta t \\ 1 & \Delta t \\ 0 & -1 \\ 0 & 1 \\ 0 & -1 \end{smallmatrix}\right],
\mathfrak a^0 = \left[\begin{smallmatrix} 0 \\ 0 \\ \Delta t a_{\rm max} \\ \Delta t a_{\rm min} \\ 0\end{smallmatrix}\right], \\
&\mathfrak G^1 = \left[\begin{smallmatrix} 1 & 0 & -1 & 0 \\ -1 & 0 & 1 & 0 \\ 0 & 1 & 0 & -1 \\ 0 & -1 & 0 & 1 \\ 0 & 0 & 0 & 0 \end{smallmatrix}\right], 
\mathfrak G^0 = \left[\begin{smallmatrix} 0 & -\tau & 0 & 0 \\ 0 & 0 & 0 & 0 \\ 0 & 0 & 0 & 0 \\  0 & 0 & 0 & 0 \\ 0 & 0 & 0 & -1
\end{smallmatrix}\right], 
\mathfrak g^0 = \left[\begin{smallmatrix}\delta_p \\ \delta_p \\ \mu_{\rm max} \\ \mu_{\rm min} \\ 0 \end{smallmatrix}\right].
\end{align*}
where $\mu_{\rm max} = \tau a_{\rm max} + \delta_v$ and $\mu_{\rm min} = \tau a_{\rm min} + \delta_v$. 
\\

Lastly, we consider the contract $\C_2$ on the desired (controlled) dynamics. The input $z(\cdot)$ to the dynamics is given by $z(k) = [p_m(k),v_m(k)]$, i.e. it consists of position and velocity measurements of the leading vehicle. The output $y(\cdot)$ is given by $y(k) = [p_f(k),v_f(k)]$, i.e. the position and velocity of the follower vehicle. We assume that the position and velocity measurements satisfy a noisy and delayed version of the kinematic relations:
\begin{align*}
    &p_m(k+1)-p_m(k)-\Delta tv_m(k) \ge  -\tau v_m(k)-\xi_{\rm down} \\
    &p_m(k+1)-p_m(k)-\Delta tv_m(k) \le  \tau v_m(k) + \xi_{\rm up} \\
    &v_m(k+1) - v_m(k) \in [-\eta_{\rm down},\eta_{\rm up}]
\end{align*}
where $\xi_{\rm down},\xi_{\rm up},\eta_{\rm down},\eta_{\rm up}$ are constants to be decided later. As for guarantees, we wish to guarantee a robustified version of headway-keeping, as we use a delayed and noisy measurement of the position of the leader vehicle instead of its true position. Specifically, we guarantee that $p_m(k) - p_f(k) - hv_f(k) \ge \delta_p$ holds for any time $k$. The contract $\C_2$ is also linear, and is given by the matrices:
\begin{align*}
&\mathfrak B^1 = \left[\begin{smallmatrix} 1 & 0 \\ -1 & 0 \\ 0 & 1 \\ 0 & -1  \end{smallmatrix}\right],
&&\mathfrak B^0 = \left[\begin{smallmatrix} -1 & -\Delta t-\tau \\ 1 & \Delta t-\tau \\ 0 & -1 \\ 0 & 1\end{smallmatrix}\right],
&&\mathfrak b^0 = \left[\begin{smallmatrix} \xi_{\rm up} \\ \xi_{\rm down} \\ \eta_{\rm up} \\ \eta_{\rm down}\end{smallmatrix}\right], \\
&\mathfrak H^1 = \left[\begin{smallmatrix} 0 & 0 & 0 & 0\end{smallmatrix}\right], 
&&\mathfrak H^0 = \left[\begin{smallmatrix} -1 & 0 & 1 & h\end{smallmatrix}\right], 
&&\mathfrak h^0 = [\delta_p].
\end{align*}

We wish to verify that $\C_1 \otimes \C_2 \preccurlyeq \C$. To do so, we choose the parameters $h = 2{\rm sec}$, $\Delta t = 0.3{\rm sec}, a_{\rm min} = a_{\rm max} = 9.8{\rm m/sec^2}$, $\tau = 0.1{\rm sec}$, $\xi_{\rm up} = 1.75{\rm m}$, $\xi_{\rm down} = 1.45{\rm m}$ and $\eta_{\rm up}=\eta_{\rm down} = 5.1{\rm m/sec}$, and use the computational framework of Section \ref{sec.Comp}, and specifically Proposition \ref{prop.Opt}. The problems defining $\varrho_\D,\varrho_\OO$ are achieved by plugging in the matrices $\mathfrak{A,B,C,G,H,J}$ in Proposition \ref{prop.Opt}, and the problem defining $\varrho_\otimes$ is defined using both Proposition \ref{prop.Opt} and Remark \ref{rem.Delay} as:
\begin{align*}
    \varrho_\otimes = \max ~&~ c_\otimes(p_m^2,v_m^2,p_m^1,v_m^1)\\
    {\rm s.t.}~&~ p_l^1 = p_l^0 + \Delta t v_l^0,~~p_l^2 = p_l^1 + \Delta t v_l^1\\
    ~&~ v_l^1 - v_l^0 \le \Delta t a_{\rm max},~~v_l^1 - v_l^0 \ge -\Delta t a_{\rm min},\\
    ~&~ v_l^2 - v_l^1 \le \Delta t a_{\rm max},~~v_l^2 - v_l^1 \ge -\Delta t a_{\rm min},\\
    ~&~ v_l^0,v_l^1,v_l^2 \ge 0\\
    ~&~v_l^1 - \tau a_{\rm max}-\delta_v \le v_m^1 \le v_l^1 + \tau a_{\rm min}+\delta_v,\\
    ~&~v_l^2 - \tau a_{\rm max}-\delta_v \le v_m^2 \le v_l^2 + \tau a_{\rm min}+\delta_v,\\
    ~&~ p_l^1 - \tau v_l^0 -\delta_p \le p_m^1 \le p_l^1 + \delta_p\\
    ~&~ p_l^2 - \tau v_l^1 -\delta_p \le p_m^2 \le p_l^2 + \delta_p\\
    ~&~ p_m^1,p_m^2,v_m^1,v_m^2,p_l^0,p_l^1,p_l^2,v_l^0,v_l^1,v_l^2 \in \R,
\end{align*}
where: \small
\begin{align*}
     c_\otimes(p_m^2,v_m^2,p_m^1,v_m^1) = \max\{&p_m^2 - p_m^1 - (\Delta t +\tau) v_m^1 - \xi_{\rm up},\\&p_m^1-p_m^2+(\Delta t - \tau)v_m^1 - \xi_{\rm down},\\&v_m^2 - v_m^1 - \eta_{\rm up},\\&v_m^1 - v_m^2 - \eta_{\rm down}\}.
\end{align*}\normalsize
We solve the three optimization problems using Yalmip \cite{Lofberg2004}, and find that $\varrho_\D = \varrho_\otimes = \varrho_\OO = 0$. Thus, we indeed have $\C_1\otimes \C_2 \preccurlyeq \C$ by Proposition \ref{prop.Opt}. As a result, we can implement each of the contracts $\C_1$, $\C_2$ independently of one another, and the resulting composition (i.e., model for the follower vehicle) will satisfy the contract $\C$.

\subsection{Implementation and Simulation}
Let us demonstrate the fact that $\C_1 \otimes \C_2 \preccurlyeq \C$ by choosing specific models for the perception and (controlled) dynamics subsystems, and showing that the guarantees on the composite systems are kept in a simulation. For the perception system, we choose the delay and noise as independent uniformly distributed random variables inside the relevant intervals, so the contract $\C_1$ is satisfied. 

For the (controlled) dynamics, we choose the system as a double integrator with an affine controller, given by the equations:
\begin{align*}
    p_f(k+1) = p_f(k) + \Delta t v_f(k),\\
    v_f(k+1) = v_f(k) + \Delta t a_f(k),
    \end{align*}
where at each time $k$, the control input $a_f$ is given as:
    \begin{align*}
    a_f = \frac{p_{m}-p_{f}-hv_{f}}{h\Delta t}+\frac{(\Delta t-\tau)v_{m}}{h\Delta t}-\frac{1}{h}v_{f}-\frac{\lambda}{h\Delta t},
\end{align*}
for $\lambda = \xi_{\rm down} + \delta_p$. We choose the initial conditions $[p_f(0),v_f(0)]$ so that $p_m(0) - p_f(0) - hv_f(0) \ge \delta_p$. We claim that the dynamics subsystem satisfies $\C_2$, and prove so using the framework presented in \cite{SharfADHS2020}. Indeed, \cite{SharfADHS2020} shows that the controlled dynamics subsystem satisfies $\C_2$ if $\theta_{2,1},\theta_{0,0}\le 0$, where:

\small
\begin{align*}
   \theta_{0,0} =  \max ~&~  -(p_m(0) - p_f(0) - hv_f(0) - \delta_p)\\ \nonumber
    {\rm s.t.} ~&~ p_m(0) - p_f(0) - hv_f(0) \ge \delta_p\\ \nonumber
    				~&~ p_m(0),p_f(0),v_m(0),v_f(0) \in \R\\
\\
  \theta_{2,1} =  \max ~&~  -(p_m^+ - p_f^+- hv_f^+ - \delta_p)\\ \nonumber
    {\rm s.t.} ~&~  p_m- p_f - hv_f \ge \delta_p\\ \nonumber
    ~&~ p_m^+ - p_m - \Delta t v_m \le \tau v_m + \xi_{\rm up}\\
    ~&~ p_m^+ - p_m - \Delta t v_m \ge -\tau v_m - \xi_{\rm down}\\
    ~&~ v_m^+ - v_m \in [-\eta_{\rm down},\eta_{\rm up}]\\
    ~&~ p_f^+ = p_f + \Delta t v_f,~~v_f^+ = v_f + \Delta t a_f \\ \nonumber
    ~&~ a_f = \frac{p_{m}-p_{f}-hv_{f}}{h\Delta t}+\frac{(\Delta t-\tau)v_{m}}{h\Delta t}-\frac{v_f}{h}-\frac{\lambda}{h\Delta t}\\\nonumber
    ~&~ p_m,p_m^+,v_m,v_m^+,p_f,p_f^+,v_f,v_f^+,a_f \in \R.
\end{align*}
\normalsize
Solving these problems using Yalmip, we find $\theta_{0,0}=\theta_{2,1} = 0$, meaning that the dynamics subsystem indeed satisfies $\C_2$. 

\begin{figure}[b]
    \centering
    \subfigure[Velocity of leader.] {\scalebox{.32}{\includegraphics{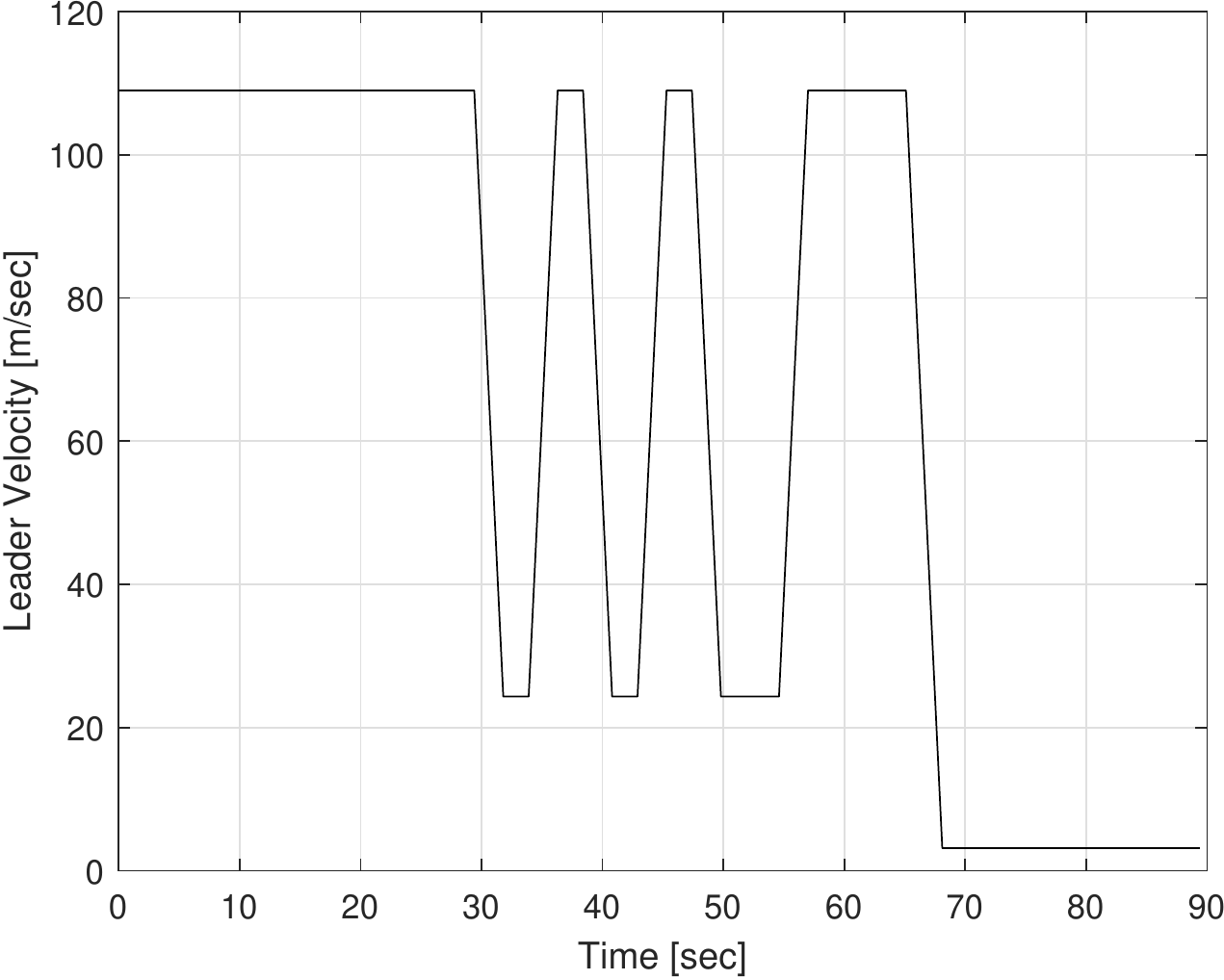}}} \hfill
    \subfigure[Acceleration of leader.] {\scalebox{.32}{\includegraphics{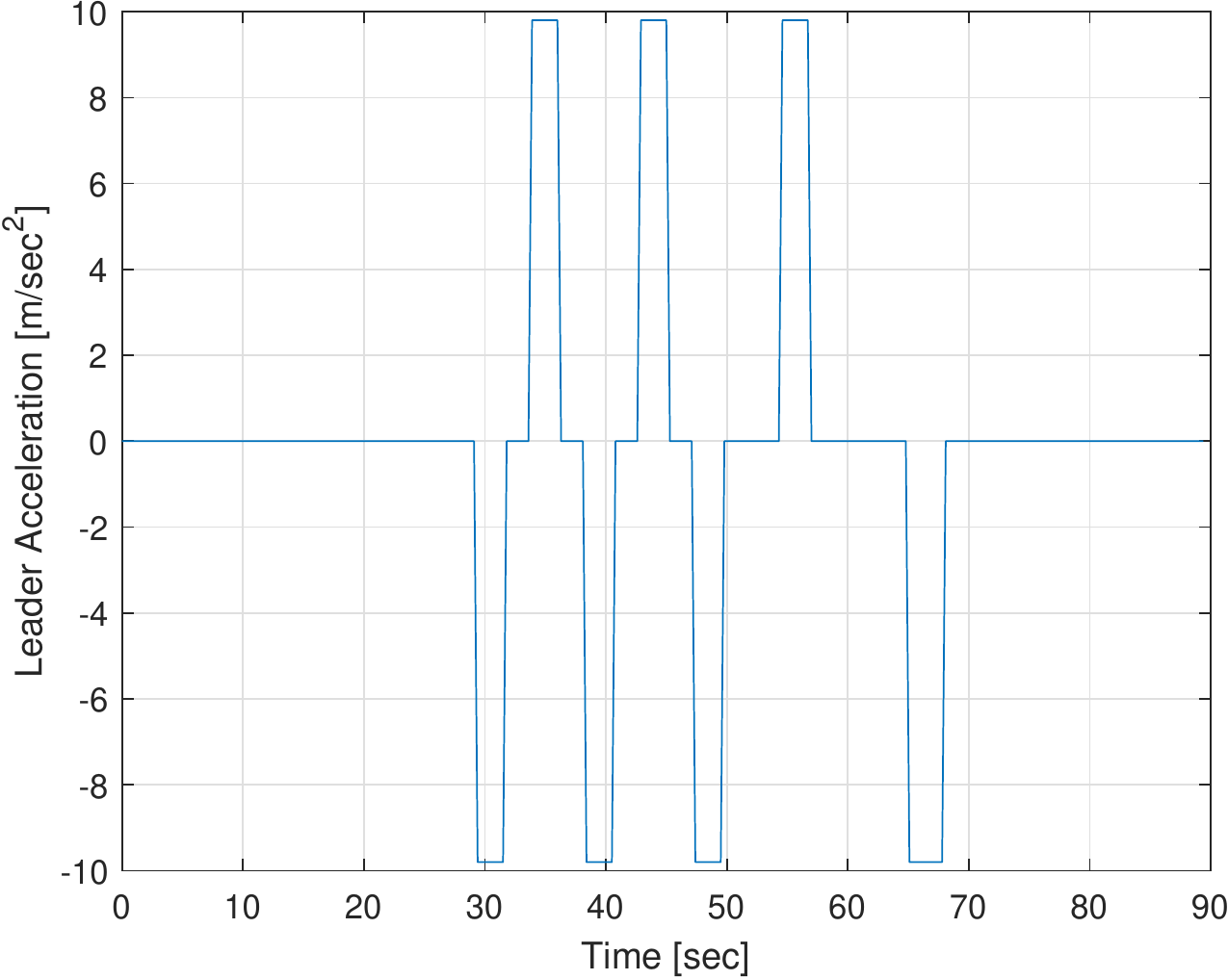}}}
    \caption{Leader vehicle in simulation.}
    \label{fig.Leading}
    \vspace{-15pt}
\end{figure}

\begin{figure}[t]
    \centering
    \includegraphics[width=0.45\textwidth]{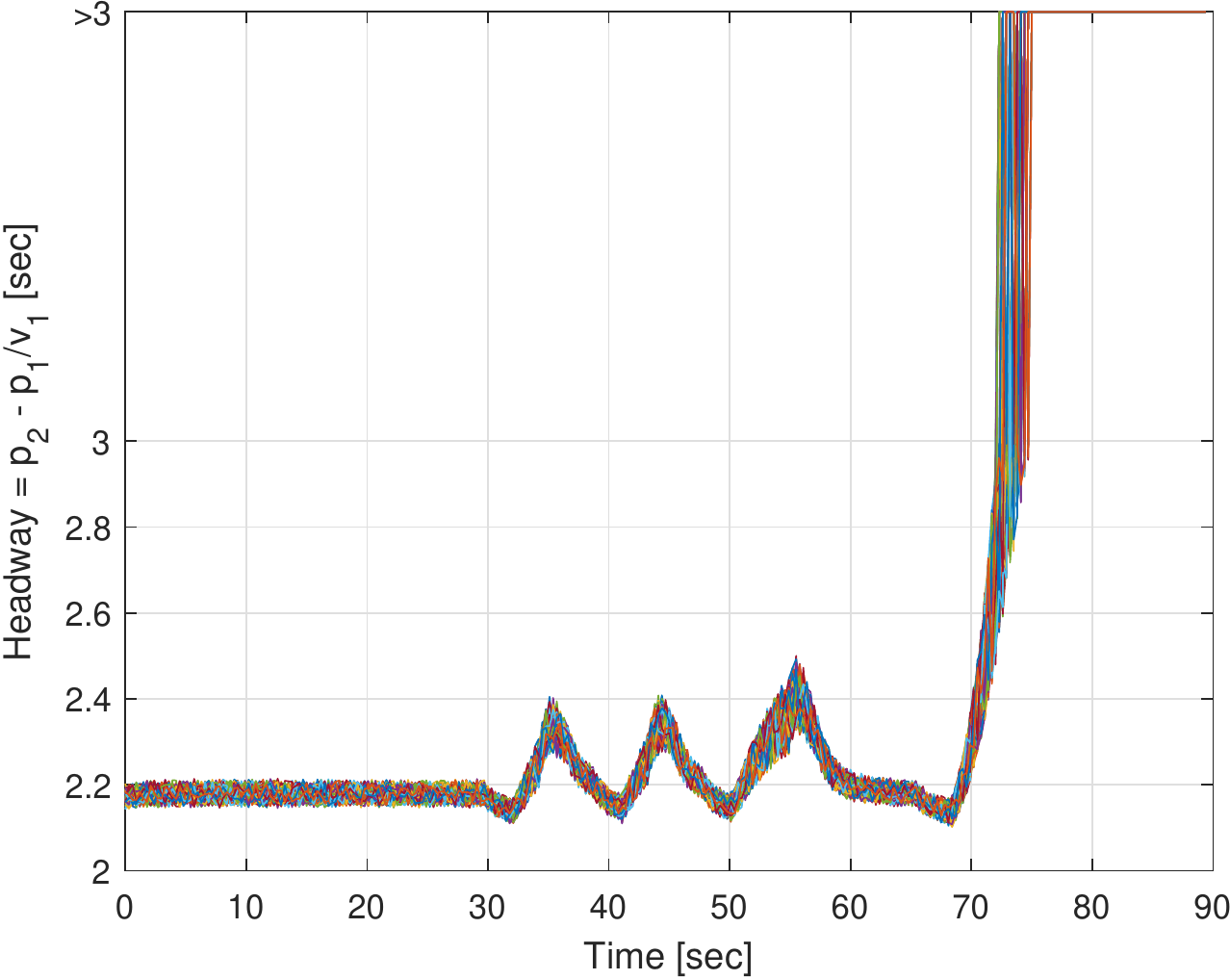}
    \caption{Headway between vehicles.}
    \label{fig.Headway}
\end{figure}

\begin{figure}[b]
    \centering
    \includegraphics[width=0.45\textwidth]{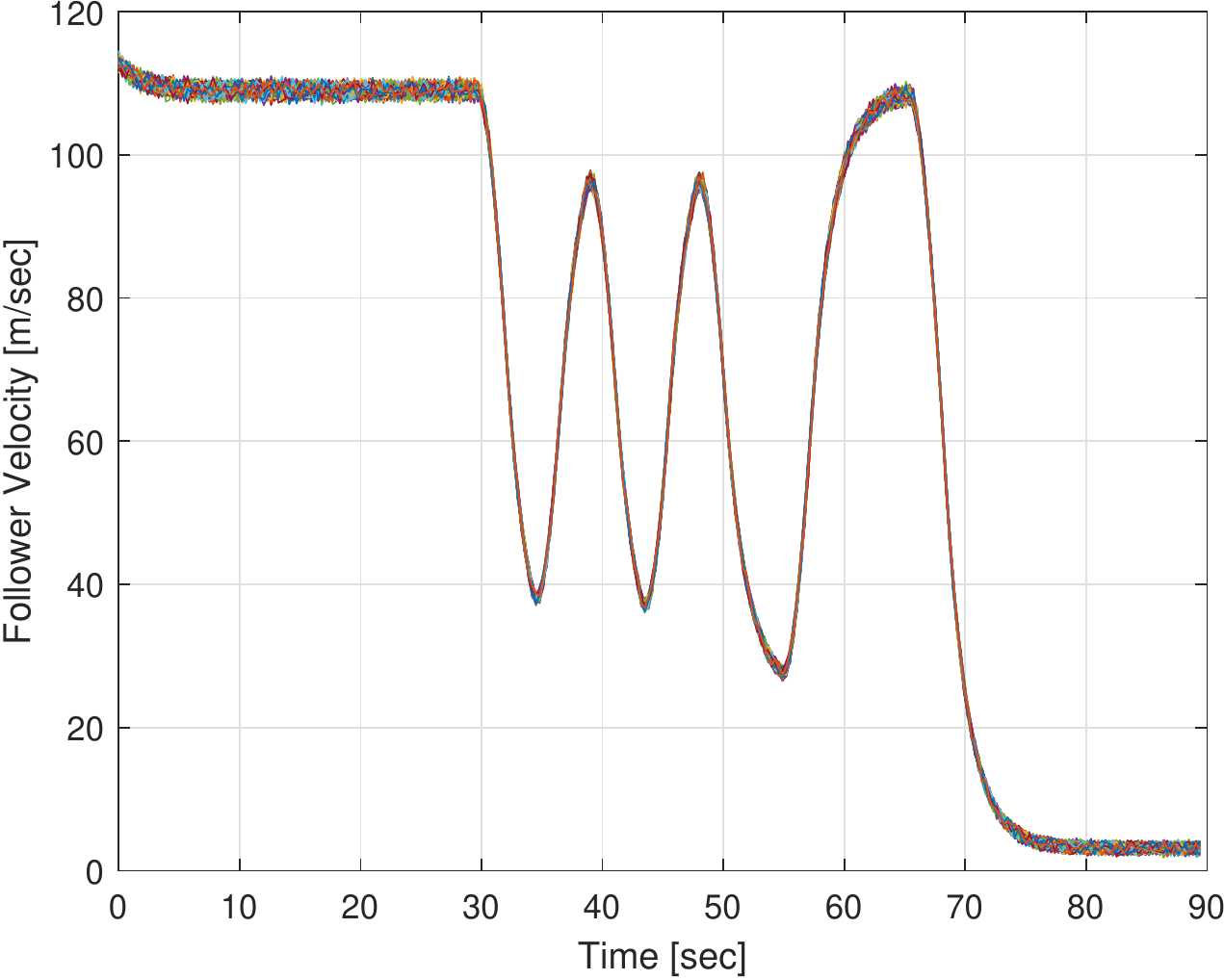}
    \caption{Velocity of follower vehicle.}
    \label{fig.FollowerVelocity}
    \vspace{-15pt}
\end{figure}

We now run a simulation of the closed-loop system, and verify that the safety specifications are met. The simulation takes a total of $90$ seconds. The leader starts at about $110{\rm km/h}$, and roughly keeps this speed for $30$ seconds. Between time $t=30{\rm sec}$ and $t = 60{\rm sec}$, the leader sways wildly between $25{\rm km/h}$ and $110{\rm km/h}$, braking and accelerating as hard as possible. At time $t=65{\rm sec}$, the leader brakes as hard as possible, lowering its velocity to about $3{\rm km/h}$. It then keeps this velocity until the end of the run. The velocity and acceleration of the leader can be seen in Fig. \ref{fig.Leading}. 
The follower starts $70 \rm m$ behind the leader at a velocity of about $113{\rm km/h}$, so the headway is kept at time $0$. We run the simulation $100$ times for both vehicles, where we simulate both the perception and the dynamics for the follower, and choose the random delay and noise independently in each run. The headway $\frac{p_l(k)-p_f(k)}{v_f(k)}$ and the velocity of the follower appear in Fig. \ref{fig.Headway} and Fig. \ref{fig.FollowerVelocity} respectively. It can be seen that in all cases, the headway is kept throughout the run, so the guarantees are satisfied, as predicted by our analysis. 

\section{Conclusions}
Building upon the assume/guarantee framework presented in \cite{SharfADHS2020}, we presented a computationally viable method of verifying compositional refinement of cascaded contracts, assuming the contracts are defined using linear inequalities. Indeed, we showed that compositional refinement is equivalent to three implications, and restated these implications using linear programs. This allows one to check whether a pair of cascaded contracts refines a contract on the composite system by solving a few linear programs. The amount of linear programs needed to be solved grows linearly with the number of assumptions and guarantees of the contracts in the problem. Similar methodology can be used to verify compositional refinement for a larger cascade of systems using the contract algebra defined in \cite{SharfADHS2020}. We exemplified our findings with a thorough case study of a 2-vehicle leader-follower scenario, in which we showed that safety specifications are met even under noise and time-varying delay. Future research can extend these results by building similar tools for verification of compositional refinement for feedback compositions or contracts for hybrid systems.

\bibliographystyle{ieeetran}
\bibliography{main}
\end{document}